\newcommand{\thickhline}{%
    \noalign {\ifnum 0=`}\fi \hrule height 0.3pt
    \futurelet \reserved@a \@xhline
}
\begin{document}

\title{On the information content of discrete phylogenetic characters }



\author{Magnus Bordewich, Ina Maria Deutschmann, Mareike Fischer, Elisa Kasbohm, Charles Semple, Mike Steel
}

\authorrunning{Bordewich, Deutschmann, Fischer, Kasbohm,  Semple, Steel} 

\institute{Magnus Bordewich:\at
              School of Engineering and Computing Sciences, University of Durham, Science Laboratories, South Road, Durham DH1 3LE  
             \email{m.j.r.bordewich@durham.ac.uk} 
\\
Ina Maria Deutschmann, Mareike Fischer, Elisa Kasbohm:\at
              Institute of Mathematics and Computer Science, Ernst-Moritz-Arndt-University Greifswald, Walther-Rathenau-Str. 47, 17487 Greifswald, Germany 
             \email{email@mareikefischer.de}     \\   
             Charles Semple, Mike Steel: \at
    School of Mathematics and Statistics, University of Canterbury, Private Bag 4800, Christchurch 8140
             \email{charles.semple@canterbury.ac.nz, mike.steel@canterbury.ac.nz, } 
            }

\date{Received: date / Accepted: date}

\maketitle

\begin{abstract} Phylogenetic inference aims to reconstruct the evolutionary relationships of different species based on genetic (or other) data.
Discrete characters are a particular type of data, which  contain information on how the species should be grouped together. However, it has long been known that some characters contain more information than others. For instance, a character that assigns the same state to each species groups all of them together and so provides no insight into the relationships of the species considered. At the other extreme, a character that assigns a different state to each species also conveys no phylogenetic signal. 
 In this manuscript, we study a natural combinatorial measure of the information content of an individual character and analyse properties of characters that provide the maximum phylogenetic information, particularly, the number of states such a character uses and how the different states have to be distributed among the species or taxa of the phylogenetic tree.

\keywords{phylogeny \and character \and information content \and convexity}
\end{abstract}

\section{Introduction}

The evolutionary history of a set of species (or, more generally, taxa) is usually described by a {\em phylogenetic tree}. Such trees can range from small trees on a clade of closely related species, through to large-scale phylogenies across many genera (such as the Tree of Life project \citep{treeoflife}). Phylogenetic trees are usually derived from genetic data, such as aligned DNA, RNA or protein sequences, genetic markers (SINEs, SNPs etc), gene order on chromosomes and  the presence and absence patterns of genes across species.
These types of data generally consist of discrete {\em characters}, each of which  assigns a state from some discrete set to each species. 

In order to derive a tree from character data, we require a measure of how well the characters  `fit' onto each possible tree in order to choose the tree which gives the best fit. One such simple measure is the notion of a character being  homoplasy-free on the tree, which means that the evolution of the character can be explained by assuming that each state has evolved only once.\footnote{This condition is weaker than the assumption that each state actually evolves  only once, since the states at the leaves may have evolved with homoplasy (reversals or convergent evolution) yet still be homoplasy-free on the tree.} It turns out out that this is equivalent to a more combinatorial condition of requiring the character to be `convex'  on the tree. This notion is defined formally in the next section, but, briefly and roughly speaking, it says that when all species (at the  leaves of the tree) that are in the same state are connected to one another,  the resulting subtrees do not intersect. This concept is illustrated in Fig.~\ref{fig:convex}.

\begin{figure}[htb]
\center
\scalebox{.7}{ \includegraphics{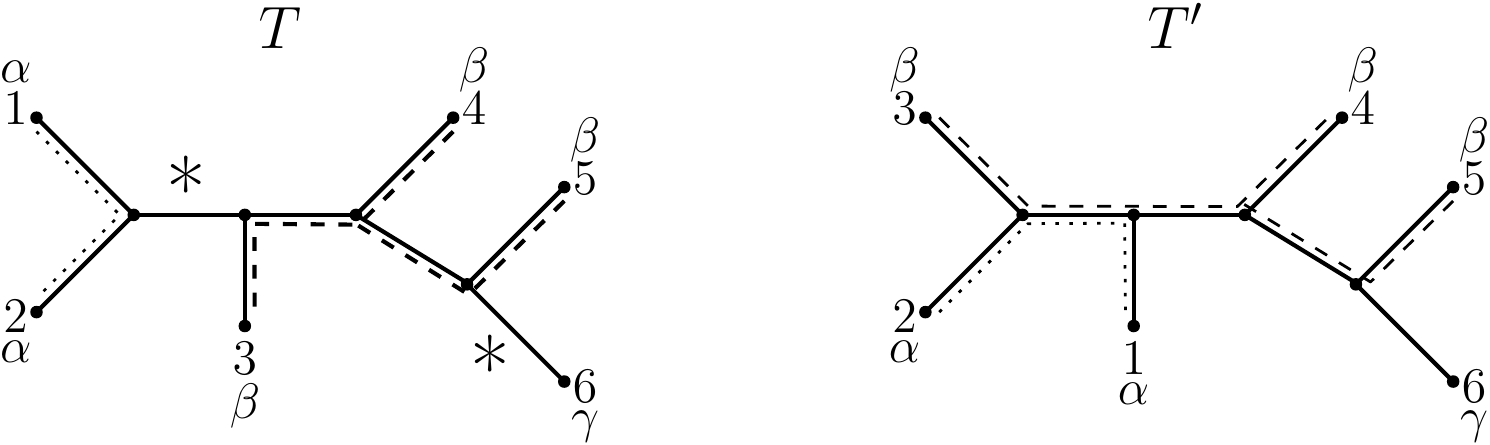} }
\caption{Character $\chi=\alpha\alpha\beta\beta\beta\gamma$ is convex on tree $T$ (left), but not on $T'$ (right). The dotted lines represent the minimum spanning tree connecting the leaves that are in state $\alpha$, whereas the dashed lines represent the minimum spanning tree connecting all leaves which are in state $\beta$. When a character that takes $k\geq 2$ states is convex on a tree, then at least $k-1$ edges are not needed in any of the spanning trees, as  shown for $T$ by the edges marked with an asterisk. }\label{fig:convex}
\end{figure}

In practice, biologists  generally build a tree by using a large number of characters.  However,  it has been shown that for any binary tree $T$ (involving {\em any} number of leaves) just  four characters (on a large enough number of states) suffice to ensure that $T$ is the only tree on which those four characters are convex (\cite{foursuffice}, \cite{bor06}). 
Moreover, even a single character already contains some information concerning which of the species should be grouped together. 
 
Note that a character is often compatible with more than one tree -- for instance, if you have six species (say $1,2, \ldots, 6$),  and the constant character $\chi$ that assigns each species the state $\alpha$, then the induced partition is $\{1,2,3,4,5, 6\}$. This implies that all species are grouped together and therefore no information concerning which species is most closely related to another species can be obtained. This particular  character is convex on all possible phylogenetic trees on six species, so this character does not provide any information on which tree should be chosen.  At the other extreme, a character for which  each species is in a different state from any other species is convex on every possible phylogenetic tree, and so it is also completely uninformative.  The same is true for a character in which some species are in one state, and each remaining species has its own unique state. 

However, if you have the character $\chi$ that assigns Species $1$ and $2$ state $\alpha$, Species $3,4$ and $5$ state $\beta$, and Species $6$ state $\gamma$, then this character is convex on some phylogenetic trees on six taxa, but not on all of them ({\em cf.} Fig. \ref{fig:convex}). 
Under the convexity criterion, such a character would clearly favour some trees over others and thus it contains some information about the trees it will fit on (namely, in this example, all trees that group Species 1 and 2 together versus Species 3, 4 and 5, which will form another group, and Species 6 will form a third group). 
Thus the number of states employed by a character as well as the number of species that are assigned a given state play an important role in deciding how much information is contained in a character. Note that our definition of phylogenetic information is purely combinatorial, and thus differs from some other approaches that are based on particular statistical models (see e.g. \cite{tow07}).

The aim of this paper is to characterize and analyse the characters that have the highest  information content in this sense (i.e. that are convex on relatively few trees and thus have a preference for these few trees over all others), when the number of states is either fixed or free to vary.  Our first main result, Theorem \ref{thm:equalblocksizes}, states that for a fixed number of states, a most informative character will be one in which the subsets (`blocks')  of species in each state are roughly the same size; more specifically, their sizes can only differ by  at most 1.  Moreover, we note that the optimal number of such blocks in a character in order to make it convex on only a few trees cannot easily be determined, as it does not grow uniformly with the number of species because `jumps' appear in the growth function. We analyse these jumps and also provide an approximation without such jumps, and explore the associated asymptotic estimate of the rate of growth (with the number of leaves) of the optimal number of states.

\section{Preliminaries}
\label{sec:prelim}
We now introduce some terminology and notation.
Let $X$ be a finite set of species. Such a set is also often called a set of {\em taxa}.  A \textit{phylogenetic $X$-tree} $T$ is an acyclic connected graph with no vertices of degree 2 in which the leaves are bijectively labelled by the elements of $X$. Such a tree is called {\em binary} if all internal vertices have degree 3. We will restrict our analyses on such trees (for reasons we will explain below) and will therefore in the following refer to phylogenetic trees or just trees for short,  even though we mean binary phylogenetic $X$-trees.

Next, we need to define the type of data we are relating to phylogenetic trees. These data are given as {\em characters}: A function $\chi: X \rightarrow \mathcal{S}$, where $\mathcal{S}$ is a set of  {\em character states}, is called a {\em character}, and if $|\chi(X)|=r$, we say that $\chi$ is an \textit{$r$-state character}. 
 
We may assume without loss of generality that $X=\{1,\ldots,n\}$.  Rather than explicitly writing $\chi(1)=c_1$, $\chi(2)=c_2,\ldots, \chi(n)=c_n$ for some states $c_i \in \mathcal{S}$, we normally write $\chi = c_1 c_2\ldots c_n$. The left-hand side of Fig.~\ref{fig:convex} depicts the character $\chi=\alpha\alpha\beta\beta\beta\gamma$ on six taxa on a tree $T$.  

Note that an $r$-state character $\chi$ on $X$ induces a partition $\pi = \pi(\chi)$ of the set $X$ of taxa into $r$ non-empty and non-overlapping subsets $X_1,\ldots, X_r$ of $X$, which can also be called {\em blocks}. For instance, the character $\chi=\alpha\alpha\beta\beta\beta \gamma$ induces the partition $\pi =\{\{1,2\},\{3,4,5\}, \{6\}\}$ (i.e. the blocks  $X_1=\{1,2\}$, $X_2=\{3,4,5\}$ and $X_3=\{6\}$).  For our purposes, the partition induced by a character is usually more important than the particular character itself. For instance, the characters $\chi_1 = \gamma\gamma\alpha\alpha\alpha\beta$ and $\chi_2 = \beta\beta\gamma\gamma\gamma\alpha$ induce the same partition $\pi =\{\{1,2\},\{3,4,5\}, \{6\}\}$ and are thus considered to be equivalent.

Now that we have defined a structure (namely phylogenetic trees) and the partitions associated with discrete character data, we can introduce a measure of how well these data fit on a tree. A character $\chi$ is called {\em convex} on a phylogenetic tree $T$, if the minimal subtrees connecting taxa that are in the same block do not intersect. This means that if you consider one state and colour the vertices on the paths from each taxon in this state to all other taxa in the same state, and if you repeat this (with different colours) for all other states, there will be no vertex that is assigned more than one colour. An illustration of this idea is given in Fig. \ref{fig:convex}, where the character $\chi=\alpha\alpha\beta\beta\beta\gamma$  is convex on $T$ but not on $T'$. Note that if $\chi$ is convex on $T$ and $|\chi(X)|>1$, this colouration may leave some vertices uncoloured, and it may also assign different colours to the endpoints of certain edges. The deletion of these edges would lead to monochromatic subtrees, all of which are assigned a unique colour (i.e. all leaves in any given subtree are in the same state). This can also be seen by considering tree $T$ from  Fig. \ref{fig:convex}, where the dotted lines refer to the subtree spanning all taxa that are in state $\alpha$ and the dashed lines span the taxa in state $\beta$. If we delete the edges indicated by the asterisks (*) in $T$, all subtrees of $T$ are monochromatic, either dotted or dashed, or an isolated leaf.  Thus a convex character induces a partition of $X$  that can also be derived by deleting some edges of $T$.

Recall that a character can be convex on more than one tree.  Moreover,  whenever a character is convex on a non-binary tree $T$, it is automatically convex on all binary trees which are compatible with this tree (i.e. all binary trees which can be derived from $T$ by resolving vertices of degree greater than three by introducing additional edges). This is illustrated in Fig. \ref{fig:convexityNonBin}, where the tree in the middle is non-binary and there are several ways to add an additional edge in order to make it binary. These additions always lead to trees on which the depicted character is still convex. Therefore, and because binary trees are most relevant in biology (as speciation events are usually considered to split one ancestral lineage into two descending lineages rather than more), we exclusively consider binary trees in the following.

\begin{figure}[htb] 
\center
\scalebox{.6}{ \includegraphics{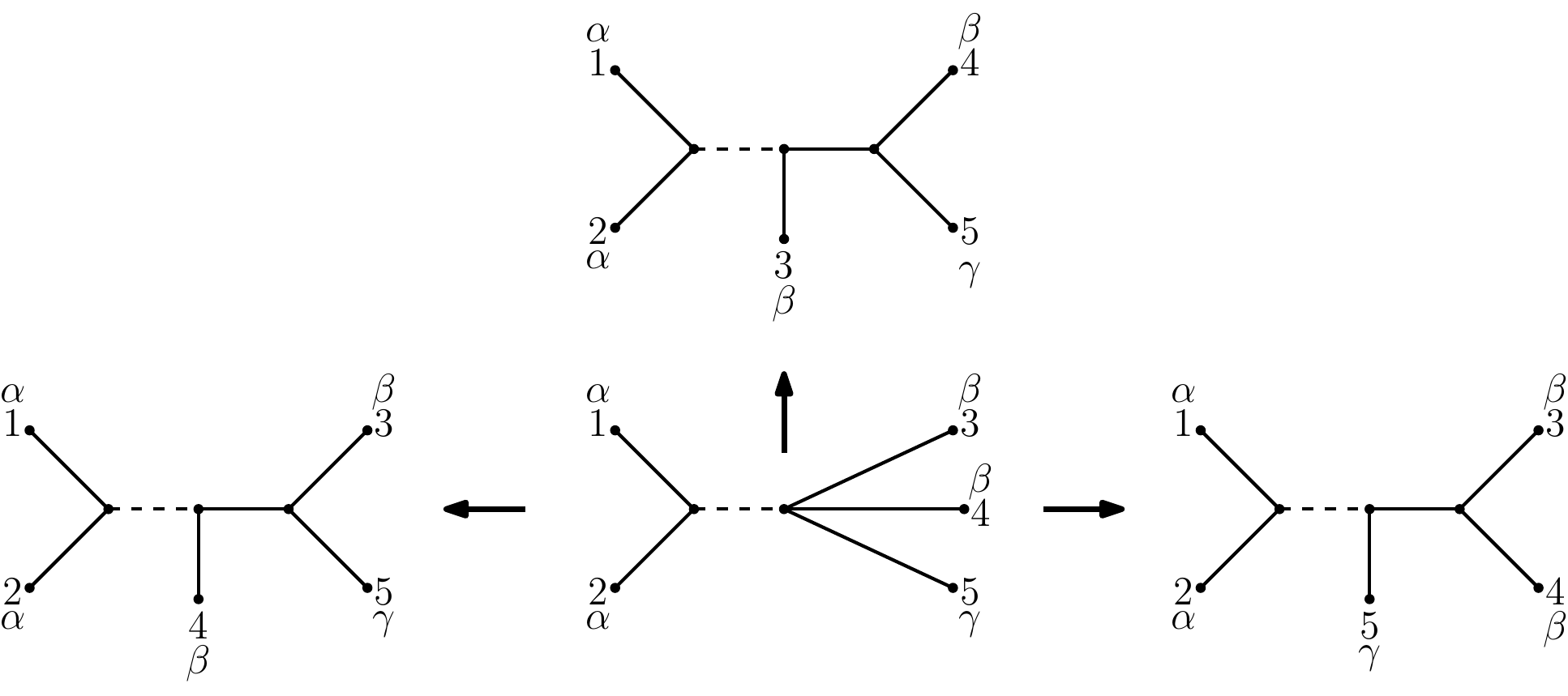} } 
\caption{Character $\chi=\alpha\alpha\beta\beta\gamma$ is convex on the non-binary tree in the middle, but also on all binary trees that are compatible with this tree. The dashed edge is the one that gives rise to the partition $\pi=\{\{1,2\},\{3,4,5\}\}$, which is also induced by $\chi$. Therefore, $\chi$ is convex on all trees which contain this edge.}\label{fig:convexityNonBin}
\end{figure}

Let $b(n)$ denote the number of binary phylogenetic trees on $X=\{1\ldots,n\}$. In total, there are $$b(n)=(2n-5)!! = (2n-5) \cdot (2n-7) \cdots  3  \cdot 1$$ such trees if $n \geq 3$, and $b(1)=b(2)=1$ (see \cite{semplesteel}). As explained above, a character can be convex on more than one tree. However, if a character is convex on all $b(n)$ trees (for some $n \in \mathbb{N}$), it is said to be {\em non-informative}. It is a well-known result that all characters in which at least two states appear at least twice are {\em informative} (see  \cite{bandeltfischer}); in other words, such characters are not convex on all trees, but only on some. As an example, consider again $\chi=\alpha\alpha\beta\beta\beta\gamma$. As explained above and as shown in Fig. \ref{fig:convex}, this character is convex only on some trees, namely those that have an edge separating Species 1 and 2 from Species 3, 4 and 5; and this character uses two of its three character states, namely $\alpha$ and $\beta$,  at least twice (in this case, $\alpha$ is used twice and $\beta$  three times). 

However, the simple distinction between informative and non-informative characters is often not sufficient. In this paper, we want to analyse how much information is contained in an informative character. This can be done by considering the fraction of trees on which the character is convex. Therefore, we denote the number of trees on which a character $\chi$ with induced partition $\pi$ is convex by $N_\pi$, and the fraction of such trees by $P_\pi = \frac{N_\pi}{b(n)}$. 

Note that for a given $r$-state character $\chi$ on $X=\{1,\ldots,n\}$ with the induced partition $\pi=\{X_1,\ldots,X_r\}$, the number $N_\pi$ can be explicitly calculated with the following formula, which was first stated in  \cite[Theorem 2]{carter}: 

\begin{equation}
N_\pi= \frac{b(n)}{b(n-r+2)} \cdot \prod\limits_{i=1}^r b(x_i +1),\label{Nformula}
\end{equation}
where $x_i=|X_i|$ for all $i=1,\ldots,r$ and $b(n)$ denotes (as stated above) the number of binary phylogenetic trees on $X=\{1,\ldots,n\}$.

We  are particularly interested in characters that {\em minimize} $P_\pi$, because they are only convex on the smallest number of trees and therefore contain the most information on which tree they fit `best' (based on the convexity criterion).  Thus, following   \cite{steelpenny}, we define the {\em information content} of a character $\chi$ with induced partition $\pi$ as follows:

\begin{equation}  I_\pi = -\ln \hspace{0.1cm}P_\pi = -\ln \left(\frac{N_\pi}{b(n)}\right).\label{inf}\end{equation}  

Note that searching for a character with minimal $P_\pi$ (i.e. a minimal fraction of trees on which it is convex), is equivalent to searching for a character with maximal $I_\pi$ (i.e. a character with maximal information content). 
Notice also that, by Eqn.~(\ref{Nformula}),  we can write
$I_\pi = \ln (b(n-r+2)) -\sum_{i=1}^r \ln(b(x_i+1))$, and since $b(k)$ is a product of consecutive odd natural numbers, we can further write $I_\pi$ as a sum of the form
$ \sum_{j  \in S}a_j \ln j$, where $S$ is a finite set of odd natural numbers and $a_j$ is an integer for each $j \in S$.
We are now in the position to state our results concerning characters for which $I_\pi$ is maximal.

\section{Results}
\label{sec:3}

\subsection{Maximizing $I_\pi$}\label{subsec:max}
We now investigate the character partitions $\pi$  of a set $X$ of size $n$  that maximize $I_\pi$.  Consider an $r$-state character $\chi$ with the induced partition $\pi = \{X_1,\ldots,X_r\}$ and let $x_i=|X_i|$ (for all $i=1,\ldots,r$) denote the block sizes. The main problem considered in this manuscript, namely maximizing $I_\pi$ (or, equivalently, minimizing $P_\pi$), consists of two combined problems, namely finding the optimal number $r$ of states (i.e. the optimal number of blocks in $\pi$), as well as the optimal block sizes $x_i$ for $i=1,\ldots,r$ (i.e. the distribution of states on taxon set $X$).  

We first consider the latter problem for the case when $n$ and $r$ are fixed.
Let $n\geq 3$ and $r\leq n$ be natural numbers. Let $N(n,r)$ denote the minimum value of $N_\pi$ over all partitions $\pi$ of $X=\{1, \ldots, n\}$ into $r$ blocks.
Formally stated:  $$N(n,r) = \min\limits_{\stackrel{\pi=\{X_1,\ldots,X_r\}:}{ |X_1|+\cdots+|X_r|=n}} N_\pi.$$

Let $l=l(n,r)= r\cdot \lceil \frac{n}{r}\rceil -n$. It is easily shown that:
$$l \left \lfloor \frac{n}{r} \right \rfloor  + (r-l) \left\lceil \frac{n}{r} \right\rceil= n,$$
and so $\{1, \ldots, n\}$ can be partitioned into $l$ sets of size $\left \lfloor \frac{n}{r} \right \rfloor$ and $r-l$ sets of size $\left\lceil \frac{n}{r} \right\rceil$.
The main result of this section is the following.

\begin{theorem} \label{thm:equalblocksizes} 
For $n\geq 3$ and $r\leq n$:
$$N(n,r) = \frac{b(n)}{b(n-r+2)} \cdot b\left(\left\lfloor\frac{n}{r}\right\rfloor+1\right)^{l} \cdot b\left(\left \lceil \frac{n}{r}\right\rceil+1\right)^{r-l},$$
where $l=r\cdot \lceil \frac{n}{r}\rceil -n$.
\end{theorem}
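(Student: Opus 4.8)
The plan is to exploit the fact that in the formula~(\ref{Nformula}) the prefactor $\frac{b(n)}{b(n-r+2)}$ depends only on $n$ and $r$, not on the individual block sizes. Hence, over all partitions $\pi=\{X_1,\ldots,X_r\}$ of $\{1,\ldots,n\}$ into exactly $r$ nonempty blocks, minimizing $N_\pi$ is equivalent to minimizing the product $\prod_{i=1}^r b(x_i+1)$ subject to $x_1+\cdots+x_r=n$ with each $x_i\ge 1$. So the theorem reduces to the purely combinatorial claim that this product is minimized by the balanced composition consisting of $l$ parts equal to $\lfloor n/r\rfloor$ and $r-l$ parts equal to $\lceil n/r\rceil$, after which substituting these block sizes back into~(\ref{Nformula}) yields the stated expression.

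The engine of the argument is the log-convexity of $b$. Since $b(k)=(2k-5)!!$ for $k\ge 3$, one computes directly that $\frac{b(k+1)}{b(k)}=2k-3$ for $k\ge 3$; combined with $b(1)=b(2)=b(3)=1$ and $b(4)=3$, the sequence of successive ratios $\frac{b(k+1)}{b(k)}$ equals $1,1,3,5,7,\ldots$ and is therefore non-decreasing in $k$ (strictly increasing for $k\ge 2$). Equivalently $b(k-1)\,b(k+1)\ge b(k)^2$, so $b$ is log-convex.

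With this in hand I would run a standard smoothing (exchange) argument. Suppose $\pi$ is a partition in which two blocks have sizes $a$ and $b$ with $a\ge b+2$. Transfer one element from the larger to the smaller, producing blocks of sizes $a-1$ and $b+1$, leaving all other blocks untouched. The product $\prod_i b(x_i+1)$ is then multiplied by
$$\frac{b(a)\,b(b+2)}{b(a+1)\,b(b+1)}=\frac{b(b+2)/b(b+1)}{b(a+1)/b(a)},$$
and since $a\ge b+2>b+1\ge 2$, monotonicity of the ratio function gives $b(a+1)/b(a)>b(b+2)/b(b+1)$, so this factor is strictly less than $1$ and the product strictly decreases. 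Because $\sum_i x_i^2$ strictly increases under each such move while remaining bounded, the process terminates, and it can only terminate at a composition in which every pair of parts differs by at most $1$. The unique such multiset of $r$ positive integers summing to $n$ is exactly $l$ copies of $\lfloor n/r\rfloor$ and $r-l$ copies of $\lceil n/r\rceil$, precisely the balanced partition identified before the theorem statement; substituting into~(\ref{Nformula}) gives the claimed value of $N(n,r)$.

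I expect the main obstacle to be the careful treatment of log-convexity at the small end of the range, where $b(1)=b(2)=b(3)=1$ produces a \emph{flat} initial segment $1,1$ in the ratio sequence rather than a strictly increasing one. One must check that the exchange step never compares two ratios lying inside this flat segment: since each block satisfies $x_i\ge 1$, the relevant arguments of $b$ are always at least $2$, and because $a\ge b+2$ forces $a\ge 3$, the larger ratio $b(a+1)/b(a)=2a-3\ge 2b+1$ strictly dominates $b(b+2)/b(b+1)=2b-1$, so the inequality stays strict throughout. The remaining bookkeeping — verifying the identity $l\lfloor n/r\rfloor+(r-l)\lceil n/r\rceil=n$ (already noted in the excerpt) and confirming that ``all parts differ by at most $1$'' pins down the multiset of sizes uniquely — is routine.
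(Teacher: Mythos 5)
Your proposal is correct and takes essentially the same approach as the paper: your key exchange inequality $b(a+1)\,b(b+1) > b(a)\,b(b+2)$ for $a \ge b+2$ is precisely Lemma~\ref{lem:equalblocksizes} (take $m=b+1$, $s=a-b$), which the paper likewise proves by comparing successive double-factorial ratios, and your concluding step---that the unique multiset of $r$ positive parts summing to $n$ with pairwise differences at most $1$ consists of $l$ copies of $\lfloor n/r\rfloor$ and $r-l$ copies of $\lceil n/r\rceil$---matches the paper's Corollary~\ref{cor:equalblocksizes} plus case analysis. One trivial slip: under your balancing move $\sum_i x_i^2$ strictly \emph{decreases} rather than increases, but termination still follows (a strictly monotone integer sequence bounded on the relevant side), or one can dispense with the iterative process altogether by taking a minimizer, which exists by finiteness, and deriving a contradiction from a single exchange, exactly as the paper does.
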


\par\vspace{0.5cm}
\begin{remark} Note that in the case where $r$ is a divisor of $n$, the equation stated in Theorem \ref{thm:equalblocksizes} reduces to $N(n,r) = \frac{b(n)}{b(n-r+2)} \cdot b(\frac{n}{r}+1)^r$, since $\lceil \frac{n}{r}\rceil=\frac{n}{r}$ and thus $l=0$.  \label{rem:equalblocksizes}\end{remark}

The proof of Theorem \ref{thm:equalblocksizes} requires the following technical lemma, which is proved in the Appendix.

\begin{lemma} \label{lem:equalblocksizes} Let $m, s \in \mathbb{N}$, $m\geq 2$ and $s \geq 2$. We then have: $$b(m+s) \cdot b(m) >b(m+s-1)\cdot b(m+1).$$
\end{lemma}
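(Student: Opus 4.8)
The plan is to reduce the claimed inequality to a single positive factor by exploiting the multiplicative recurrence satisfied by $b$. The starting point is the observation that the double-factorial formula $b(k) = (2k-5)!!$ yields the clean recurrence
\begin{equation}
b(k) = (2k-5)\, b(k-1) \qquad \text{for all } k \geq 3. \label{eq:recurrence}
\end{equation}
For $k \geq 4$ this is immediate from $(2k-5)!! = (2k-5)\cdot(2k-7)!!$, and the case $k=3$ must be checked by hand against the boundary values $b(1)=b(2)=1$, where it also holds since $b(3) = 1 = (2\cdot 3 - 5)\cdot b(2)$. I would state this recurrence first, as it is the only structural fact about $b$ that the argument needs.

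Next, since $m \geq 2$ and $s \geq 2$, both indices $m+s$ and $m+1$ are at least $3$ (indeed $m+s \geq 4$ and $m+1 \geq 3$), so the recurrence \eqref{eq:recurrence} applies to each. I would use it to peel off one factor from the two ``large'' terms:
$$b(m+s) = (2m+2s-5)\, b(m+s-1), \qquad b(m+1) = (2m-3)\, b(m).$$
Substituting these into the difference $b(m+s)\,b(m) - b(m+s-1)\,b(m+1)$ makes the common factor $b(m+s-1)\,b(m)$ appear in both products, so the difference collapses to
$$\bigl[(2m+2s-5) - (2m-3)\bigr]\, b(m+s-1)\, b(m) = (2s-2)\, b(m+s-1)\, b(m).$$
Since $s \geq 2$ gives $2s - 2 \geq 2 > 0$, and since $b$ takes only positive integer values, this difference is strictly positive, which is exactly the claim.

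The argument is essentially a one-line cancellation once the recurrence is in place, so I do not expect any real computational obstacle. The only point requiring care---and the thing I would double-check---is the validity of \eqref{eq:recurrence} at its smallest index: the boundary convention $b(1)=b(2)=1$ sits outside the $(2n-5)!!$ regime, so one must verify separately that the recurrence still holds at $k=3$ (it does), and confirm that the smallest index to which the recurrence is actually applied, namely $m+1 = 3$ in the extreme case $m = 2$, is covered. Everything else is forced by the positivity of $b$.
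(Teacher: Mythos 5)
Your proof is correct and is essentially the paper's argument: both reduce the claim to the comparison $2(m+s)-5 > 2m-3$ after cancelling the common factor $b(m+s-1)\,b(m)$. The only difference is organizational --- by verifying the recurrence $b(k)=(2k-5)\,b(k-1)$ down to $k=3$, you handle $m=2$ and $m\geq 3$ in one uniform calculation, whereas the paper treats $m=2$ as a separate case (checked directly via $b(2)=b(3)=1$) and then runs the same cancellation for $m\geq 3$ written out with double factorials.
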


Lemma \ref{thm:equalblocksizes} immediately leads to the following corollary (also derived in \cite{schutz}).

\begin{corollary} \label{cor:equalblocksizes} If a character $\chi$ with induced partition $\pi = \{X_1,\ldots,X_r\}$ and block sizes $x_1,\ldots,x_r$ maximizes $I_\pi$, then for $x_i$ and $x_j$ ($i,j \in \{1,\ldots,r\}$, $i \neq j$), we have: $|x_i-x_j|\leq 1$ (i.e. the block sizes differ by at most 1).
\end{corollary}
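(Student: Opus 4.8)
The plan is to reduce the claim to a single exchange (rebalancing) step and then invoke Lemma~\ref{lem:equalblocksizes}. Since the character $\chi$ uses some definite number $r$ of states and the taxon set has fixed size $n$, the factor $\frac{b(n)}{b(n-r+2)}$ appearing in Eqn.~(\ref{Nformula}) is a positive constant; hence among all partitions of $X$ into $r$ blocks, maximizing $I_\pi$ is equivalent to minimizing $N_\pi$, which in turn is equivalent to minimizing the product $\prod_{i=1}^r b(x_i+1)$. It therefore suffices to show that any partition whose block sizes differ by at least $2$ is strictly suboptimal for this product.

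First I would argue by contradiction: suppose $\pi = \{X_1,\ldots,X_r\}$ maximizes $I_\pi$ yet has two blocks $X_i, X_j$ with $x_i \ge x_j + 2$. I would then form a new partition $\pi'$ by moving a single taxon from $X_i$ to $X_j$, leaving all other blocks (and hence both $r$ and $n$) unchanged, so that the two affected block sizes become $x_i - 1$ and $x_j + 1$. Because the change to the product is entirely local, comparing $N_\pi$ with $N_{\pi'}$ reduces to comparing $b(x_i+1)\,b(x_j+1)$ with $b(x_i)\,b(x_j+2)$.

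The key step is to align this comparison with Lemma~\ref{lem:equalblocksizes}. Setting $m = x_j + 1$ and $s = x_i - x_j$, the hypotheses $m \ge 2$ and $s \ge 2$ are met: $m \ge 2$ because every block is non-empty (so $x_j \ge 1$), and $s \ge 2$ by the assumption $x_i \ge x_j + 2$. The lemma then gives $b(m+s)\,b(m) > b(m+s-1)\,b(m+1)$, that is, $b(x_i+1)\,b(x_j+1) > b(x_i)\,b(x_j+2)$. Hence $N_{\pi'} < N_\pi$, so $I_{\pi'} > I_\pi$, contradicting the maximality of $I_\pi$. Consequently no two blocks can differ by more than $1$, which is precisely the assertion $|x_i - x_j| \le 1$ for all $i \neq j$.

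I do not anticipate a genuine obstacle, since all of the analytic difficulty is absorbed into Lemma~\ref{lem:equalblocksizes} (proved in the Appendix). The only points demanding care are, first, verifying that the lemma's hypotheses $m \ge 2$ and $s \ge 2$ genuinely hold under the contradiction assumption, and second, exploiting the \emph{strictness} of the lemma's inequality: strictness is exactly what upgrades the statement ``a balanced partition attains the minimum'' (already implicit in Theorem~\ref{thm:equalblocksizes}) to the stronger conclusion ``every maximizer of $I_\pi$ must be balanced.''
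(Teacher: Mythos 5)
Your proof is correct and follows essentially the same route as the paper's own argument: a proof by contradiction in which one taxon is moved from the larger block to the smaller one, with the substitution $m = x_j+1$, $s = x_i - x_j$ and the strict inequality of Lemma~\ref{lem:equalblocksizes} showing the exchange strictly decreases $N_\pi$. Your explicit observation that the prefactor $\frac{b(n)}{b(n-r+2)}$ is constant (so only the product $\prod_{i=1}^r b(x_i+1)$ matters) is left implicit in the paper, but the substance is identical.
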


\begin{proof} Let $\chi$ be a character with the induced partition $\pi=\{X_1,\ldots,X_r\}$ that maximizes $I_\pi$ (equivalently, which  minimizes $N_\pi$).  Let $x_i=|X_i|$ for all $i=1,\ldots,r$. Assume that there exist $i,j \in \{1,\ldots,r\}$ such that $|x_i-x_j|\geq 2$. Without loss of generality, assume that $x_i>x_j$. Set $m=x_j+1$ and $s=x_i-x_j$. Both $m$ and $s$ are then at least 2 (because $x_j \geq 1$ by definition of partition $\pi$ and $x_i-x_j\geq 2$ by assumption). We apply Lemma~\ref{thm:equalblocksizes} and find that $$b(x_i+1)\cdot b(x_j+1)=b(m+s)\cdot b(m) > b(m+s-1)\cdot b(m+1) = b(x_i) \cdot b(x_j+2).$$ Note that the contribution of $X_i$ and $X_j$ to  $\prod\limits_{i=1}^r b(x_i +1)$ in $N_\pi$ of Eqn. (\ref{Nformula}) is $b(x_i+1)\cdot b(x_j+1)$. However, if we now modify $\chi$ so that we remove one element of $X_i$ and add it to $X_j$, the contribution of this modified character is $b(x_i) \cdot b(x_j+2)$, which we have shown to be smaller than the original contribution. This is a contradiction, as $\chi$ was chosen as a minimizer of $N_\pi$. Therefore, the assumption $|x_i-x_j|\geq 2$ was wrong and thus we have $|x_i-x_j|\leq 1$. This completes the proof. \qed
\end{proof}

\noindent We now use Lemma \ref{lem:equalblocksizes} and  Corollary \ref{cor:equalblocksizes} to prove Theorem \ref{thm:equalblocksizes}.

\begin{proof}[Theorem \ref{thm:equalblocksizes}] 


\noindent Using Eqn. (\ref{Nformula}), the only thing that remains to be shown is that: $$\prod\limits_{i=1}^r b(x_i +1) = b\left(\left \lfloor \frac{n}{r}\right \rfloor+1\right)^{l} \cdot b\left(\left \lceil \frac{n}{r}\right \rceil+1\right)^{r-l}.$$ Considering Remark \ref{rem:equalblocksizes}, we do this by investigating the cases $r\mid n$ and $r \nmid n$ separately. 


\begin{enumerate}
\item Let $r \mid n$ (i.e. $n=k\cdot r$ for some $k \in \mathbb{N}$). Let $\chi$ be a character with induced partition $\pi=X_1,\ldots,X_r$ such that $N_\pi=N(n,r)$ (i.e. $\pi$ minimizes $N_\pi$ for given values of $n$ and $r$). Now assume that not all block sizes are equal to $\frac{n}{r}=k$. There is then an $i \in \{1,\ldots,r\}$ such that $x_i \neq k$. If $x_i > k$, then as $x_1+\ldots + x_r=n$, there must be a $j\in \{1,\ldots,r\}$ such that $x_j <k$ (or vice versa). Let us assume, without loss of generality, that $x_i=k+\hat{s}$ and $x_j=k-\tilde{s}$ for $\hat{s},\tilde{s} \in \mathbb{N}$; in particular,  $\hat{s},\tilde{s}\geq 1$. Then $x_i-x_j=\hat{s}+\tilde{s}\geq 2$. This is a contradiction because, by Corollary \ref{cor:equalblocksizes}, $x_i$ and $x_j$ can differ by at most 1 as $\chi$ minimizes $N_\pi$. Thus  in the case where $n=k\cdot r$, we have $x_i=k=\frac{n}{r}$ for all $i=1,\ldots, r$ and therefore $\prod\limits_{i=1}^r b(x_i +1) =  b(\frac{n}{r}+1)^r$.

\item Next, consider the case where $r \nmid n$. Using Corollary \ref{cor:equalblocksizes}, a character $\chi$ with the  induced partition $\pi=\{X_1,\ldots,X_r\}$ which minimizes $N_\pi$ can only lead to sets of sizes $x_i$, $x_j$, which differ by at most 1. As we need $r$ such sets in total, the only way to achieve this is by allowing $l$ sets of size $\lfloor \frac{n}{r}\rfloor$ and $r-l$ sets of size $\lceil\frac{n}{r} \rceil$ for some $l \in \mathbb{N}$, $l\leq r$ (note that $\lceil \frac{n}{r} \rceil-\lfloor\frac{n}{r} \rfloor=1$ as $r \nmid n$). This has a unique solution, as $n = l \cdot \lfloor \frac{n}{r}\rfloor + (r-l) \cdot \lceil\frac{n}{r} \rceil$ leads to $l=r\cdot \lceil \frac{n}{r}\rceil -n$.  Moreover, this leads to $\prod\limits_{i=1}^r b(x_i +1) = b(\lfloor \frac{n}{r}\rfloor+1)^{l} \cdot b(\lceil \frac{n}{r}\rceil+1)^{r-l}$, which, together with Eqn. (\ref{Nformula}), completes the proof. 
\end{enumerate}
\qed
\end{proof}

\subsection{The number of states ($r_n$) that maximizes $I_\pi$}

As we have seen in Corollary \ref{cor:equalblocksizes} and in the proof of Theorem \ref{thm:equalblocksizes}, a character which has maximal information content $I_\pi$ induces a partition $\pi=\{X_1,\ldots,X_r\}$ of roughly equal block sizes $x_1,\ldots,x_r$. In the  case where $r$ divides $n$, all block sizes are equal to  $\frac{n}{r}$; otherwise, there are $l=r\cdot \lceil \frac{n}{r}\rceil -n$ blocks of size $\lfloor \frac{n}{r}\rfloor$, and all other $r-l$ sets have size $\lceil \frac{n}{r}\rceil$. 

Recall that in order to find characters that maximize $I_\pi$ and thus minimize $N_\pi$, we have to solve two problems: we have to find the optimal value of $r$ as well as the corresponding block sizes $x_i$.

Let
$$I(n,r) = -\ln\left(\frac{N(n,r)}{b(n)}\right),$$
which is the maximal value of $I_\pi$ over all partitions of $\{1, \ldots, n\}$ into $r$ blocks.
Let $r_n$ be the value of $r$ that maximizes $I(n,r)$.

Consider the special case where $n$ is a multiple of $r$. In this case, we  know that the block sizes that maximize $I_\pi$ are exactly $\frac{n}{r}$.   If we only look at this fixed distribution of states, the two problems stated above -- namely finding the optimal value of $r$ and the optimal block sizes $x_i$ -- reduces to just the first problem, namely finding the optimal value of $r$.

\noindent Note that when $r=1$, we have $k=n$ and $|X|=n=x_1$, and thus by Eqn. (\ref{Nformula}) we get:  $$N_\pi = \frac{b(n)}{b(n-1+2)} \cdot b(x_1 +1) = \frac{b(n)}{b(n+1)} \cdot b(n +1) = b(n).$$

\noindent In other words, in the case where a character $\chi$ only employs one character state (say $\alpha$) the resulting character $\chi=\alpha\alpha\ldots \alpha$ on $X=\{1,\ldots,n\}$ is convex on {\em all} $b(n)$ trees on the taxon set $X$, which means that $N_\pi$ is maximal and therefore $I_\pi=-\ln\frac{N_\pi}{b(n)}=-\ln\frac{b(n)}{b(n)}=0$, which is minimal. Similarly, if there are $|X|=n$ different character states employed by $\chi$ (i.e. if $x_i=k=1$ for all $i=1\ldots,r$)  we get:
$$N_\pi = \frac{b(n)}{b(n-n+2)} \cdot \prod\limits_{i=1}^r b(x_i +1) = \frac{b(n)}{b(2)} \cdot \prod\limits_{i=1}^r b(1 +1) = b(n) \cdot b(2)^{r-1} =b(n).$$

\noindent Here, the last two equations use the fact that $b(2)=1$. In particular, if  a character employs $r=n$ character states, this character is also convex on all trees on taxon set $X=\{1,\ldots,n\}$, and thus $I_\pi=0$. 

\par\vspace{0.5cm}
\noindent Therefore, if we wish to minimize $N_\pi$ and thus $P_\pi$ in order to maximize $I_\pi$, the number $r_n$ of character states must lie strictly between $1$ and $n$; otherwise, $N_\pi$ is maximal. Between these boundary cases, it is not obvious how to find $r_n$.  For example, if we fix $n=120$ and exhaustively examine all possible values for $r$ between 1 and $n$, then we find that $r_n=24$. This scenario is depicted in the left-hand portion of Fig.~\ref{case120}.

\noindent Similarly, we randomly sampled values of $n$ between 10 and 10000, and considered just the divisors for each value of $n$ in order to estimate the divisor $r$ of $n$ that maximizes $I_\pi$, where $\pi$ is a partition into $r$ blocks.  The results are depicted in the right-hand portion of Fig. ~\ref{case120}. However, note that we discarded $n$ whenever our random choice of $n$ was a prime number, because then it is clear that the only divisors are $1$ and $n$, which leads to the cases we analysed above for which we know that $N_\pi=b(n)$ and thus $P_\pi = 1$ and so $I_\pi = 0$.

\begin{figure}[htb]
\center
\scalebox{.35}{ \includegraphics{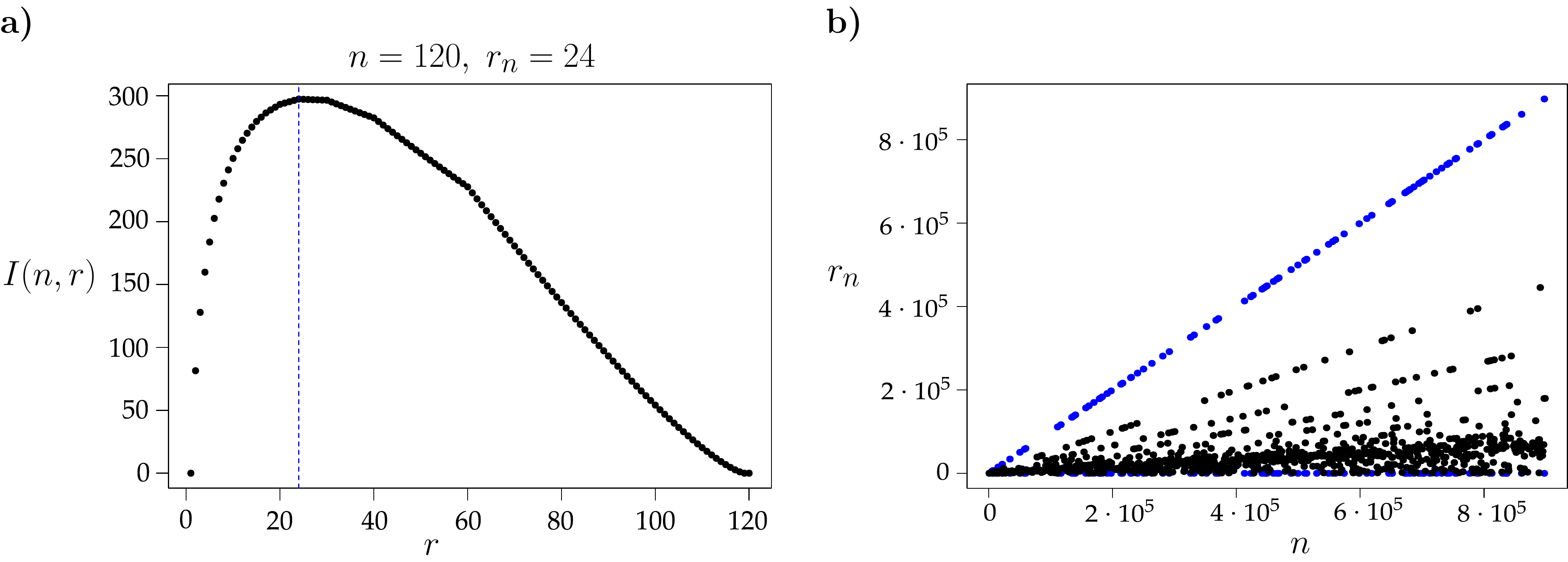} }
\caption{ On the left-hand side, the case $n=120$ is depicted, along with all values of $r$ from 1 to $n$. It can be seen that $I_\pi=-\ln(P_\pi)$ is maximal when $r=24$ is chosen. On the right hand side, the plot shows the divisor $r$ of $n$ for which $I_\pi$ is maximal (where $\pi$ is a partition into $r$ blocks) for randomly chosen values of $n$ between 10 and $10^6$.  The primes in this interval do not allow for any other equal block sizes than one block of size $n$ or $n$ blocks of size 1 (which have equal $I_\pi$ value of 0);  the top (blue) line of dots shows this value $r_n=n$ for the latter choice.}\label{case120}
\end{figure}

\par\vspace{0.5cm}

\par\vspace{0.5cm}
\subsection{Analysis of the growth of $r_n$}\label{subsec:jumps}

\subsubsection{The shape of $I_\pi$ and its consequences for $r_n$}\label{calculations}

By exploiting Theorem \ref{thm:equalblocksizes}, exhaustive searches for $r_n$, given $n$, can be done more efficiently. This is because for each value of $r$, we now know the optimal block sizes, so we do not have to look at all possible partitions. Consequently, an exhaustive search for $r_n$ by testing all possible values of $r$ for a fixed value of $n$ is easily possible up to $n=10000$ (and probably even higher than that).

In order to understand the growth of $r_n$, we first explicitly searched for $r_n$ for each value of $n$ between 1 and 360 ({\em cf.} Fig. \ref{jumps1})) and between 1 and 10000 ({\em cf.} Table \ref{Tab1})). Although Fig. \ref{jumps1} shows that $r_n$ has an increasing trend as $n$ grows as well as piecewise linear growth, there are {\em jumps} back to a smaller number of blocks from time to time. Clearly, the growth of $r_n$ is not uniform. It seems as if the size of the intervals between the jumps increases roughly threefold. Table \ref{Tab1} gives the exact numbers for the jumps for $n \leq 10000$. Note that not only does the distance between the jumps increase, but also the size of the jumps $r_n-r_{n+1}$. However, if we consider the size of the jumps relative to $r_n$, then the jump sizes actually decrease. The sequence of jumps $(9, 30, 104, 345, \ldots)$ does not follow any obvious pattern and could not be matched to any known series of numbers in the On-Line Encyclopedia of Integer Sequences (\cite{sloane2003line}).
\begin{table}[b!] \centering 
\begin{tabular}{|  rrcrr | } 
\hline $n$ & $r_n$ &  $\left \lfloor{\frac{n}{r}}\right \rfloor$ & $\left \lceil{\frac{n}{r}}\right \rceil$ & $- \ln P_\pi$ \\ 
\hline $8$ & $4$ &  $2$ &  $2$ & $4.654$ \\ 
$9$ & $3$ &  $3$ &  $3$ & $5.953$ \\ \hdashline
$29$ & $9$ &  $3$ &  $4$ & $41.016$ \\ 
$30$ & $8$ &  $3$ &  $4$ & $43.151$ \\ \hdashline
$103$ & $25$  & $4$ &  $5$ & $242.696$ \\ 
$104$ & $21$  & $4$ &  $5$ & $245.854$ \\ \hdashline
$344$ & $68$  & $5$ &  $6$ & $1141.630$ \\ 
$345$ & $58$ & $5$ &  $6$ & $1145.770$ \\ \hdashline
$1108$ & $184$  & $6$  & $7$ & $4756.330$ \\ 
$1109$ & $159$  & $6$  & $7$ & $4761.460$ \\ \hdashline
$3484$ & $497$  & $7$ &  $8$ & $18376.200$ \\ 
$3485$ & $436$ & $7$ & $8$ & $18382.300$ \\ 
\hline 
\end{tabular} 
  \caption{All jumps of $r_n$ for $n \leq 10000$.} 
  \label{Tab1} 
\end{table}

\begin{figure}[htb] 
	\center
	\includegraphics[scale=0.43]{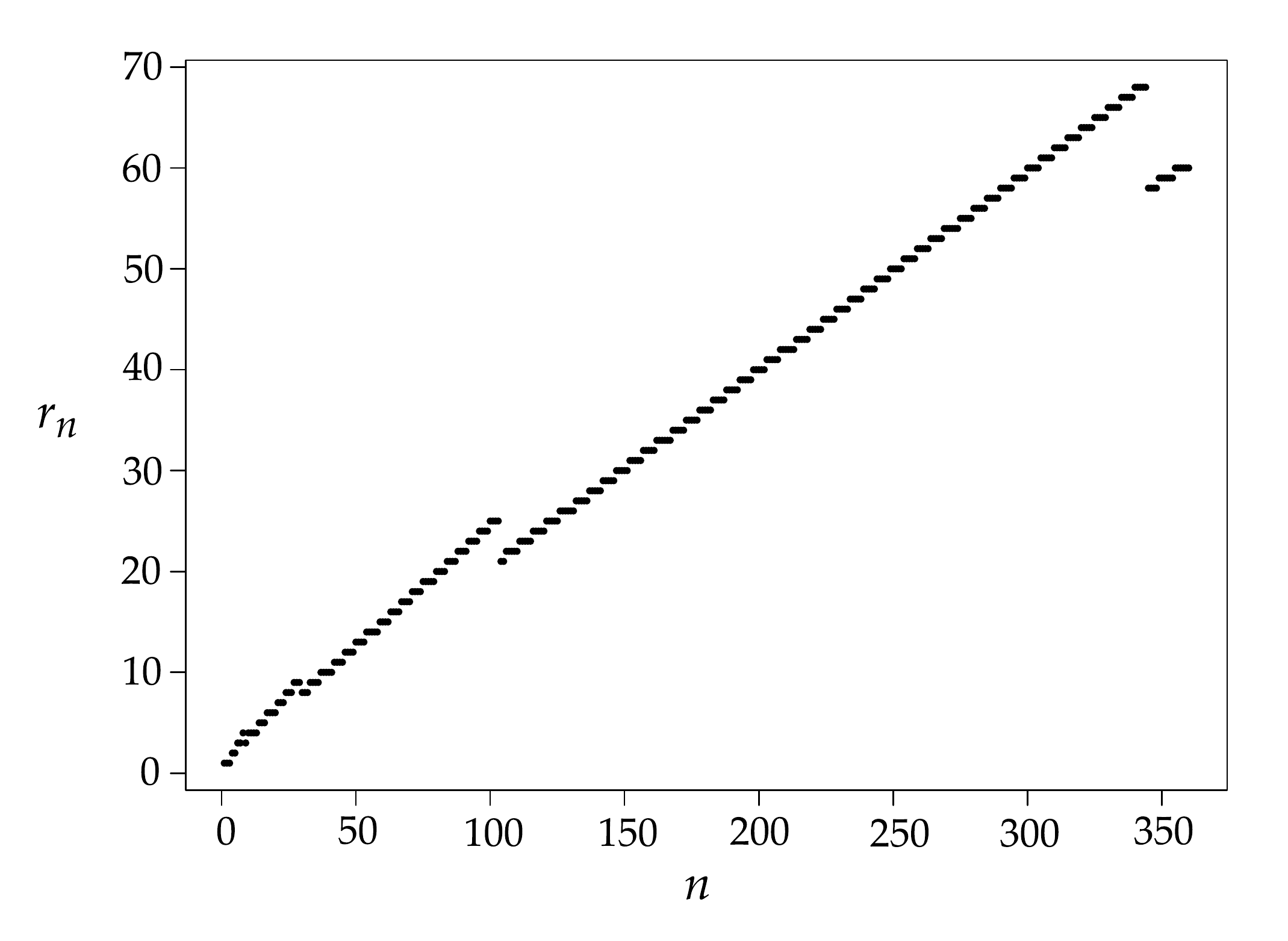}
	\caption[caption of drops to 360]{The values of $r_n$ for values of $n$ between 1 and 360. Note that  $r_n$ drops down at $n=9$ (from $r_n=4$ at $n=8$ to $r_{n+1}=3$), as well as at $n=30$, $n=104$ and $n=345$, as can also be seen in Table \ref{Tab1}.
	\label{jumps1}
	}
\end{figure}

\begin{figure}[htb]
	\begin{centering}
		\scalebox{0.6}{\includegraphics[width=\textwidth]{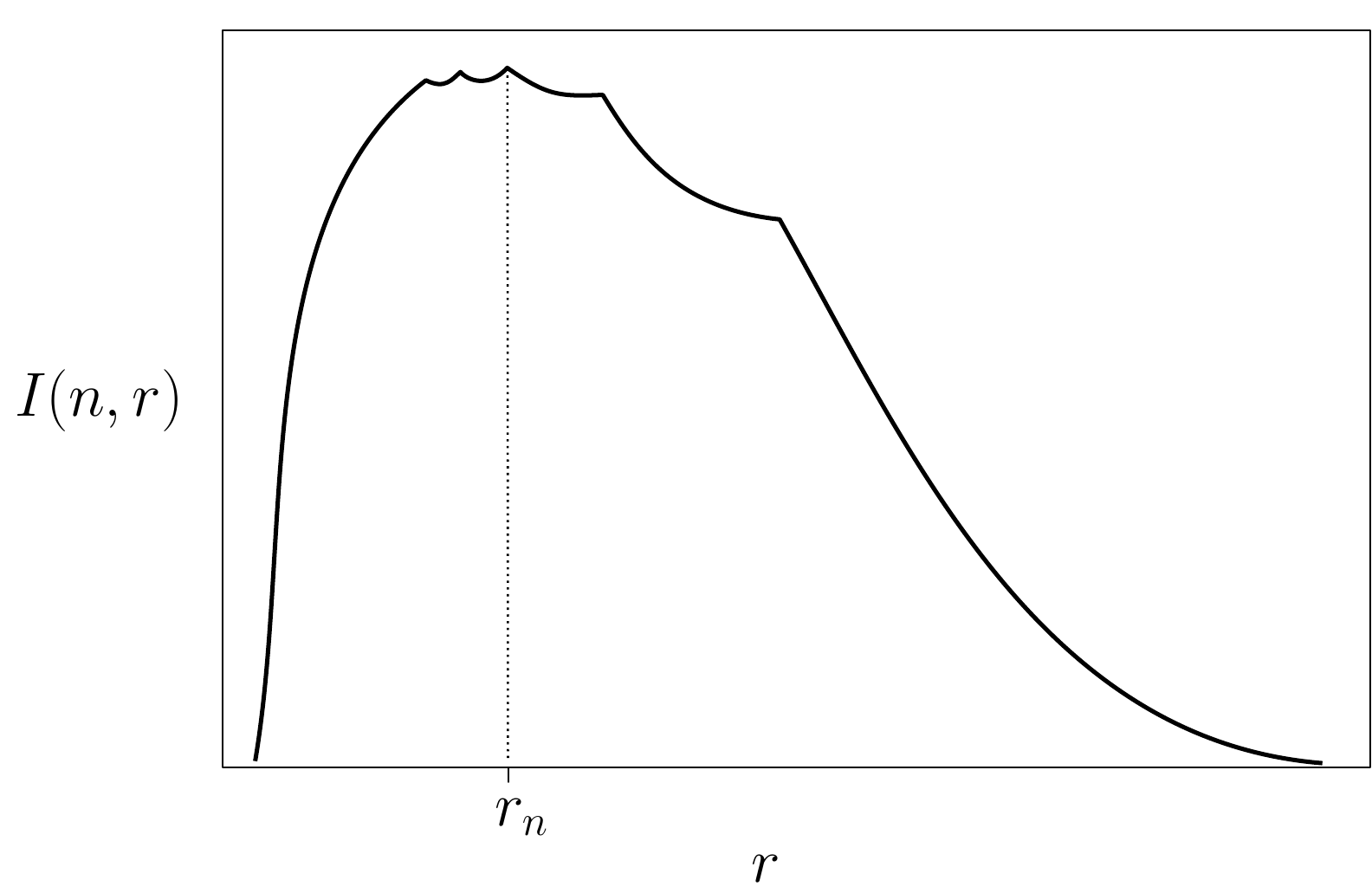}}
		\caption{\label{convP}A simplified sketch of the shape of $I_\pi$ as presented in Fig. \ref{case120}.}\end{centering}
\end{figure}

\begin{figure}[htb]
	\center
	\includegraphics[scale=0.5]{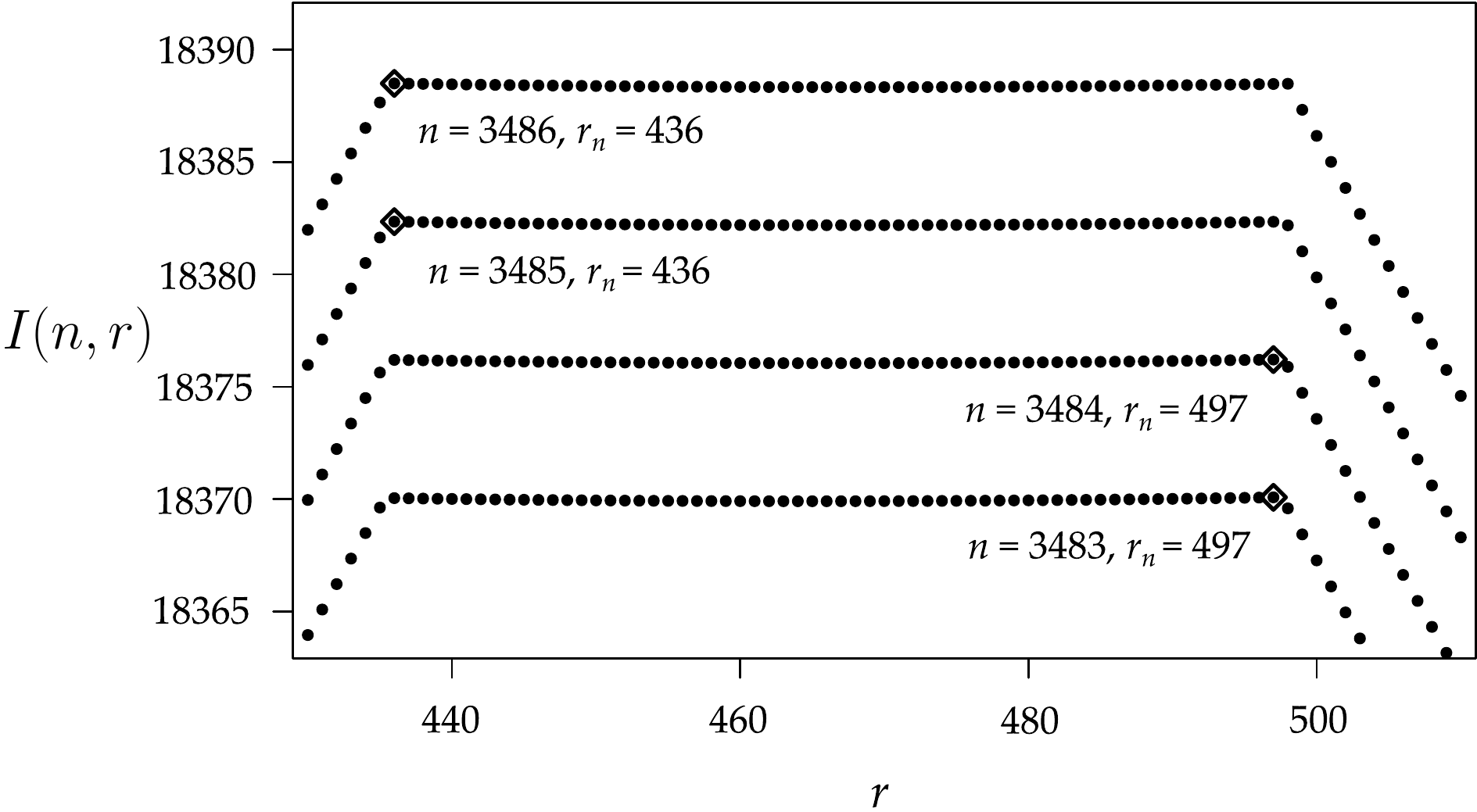}
	\caption{The value $r_n$ jumps between $n=3484$ and $n=3485$ as the maximum switches from the right edge of the convex section to the left edge.}
	\label{flatTops}
\end{figure}

\subsubsection{The shape of $I(n,r)$}\label{derivative}

We now investigate the shape of the function $I(n,r)$ as $r$ increases. For a fixed value of $n$, a closer look at the graph of $I(n,r)$ reveals the reason for the jumps in the block sizes; namely, that the graph is not as smooth as it may seem at first glance. It is instead a concatenation of several convex functions. This can already be guessed from Fig. \ref{case120} (left-hand graph), but in order to make it a bit more obvious, we sketched the plot again (enhancing the shape) in Fig. \ref{convP}. Note that the value of $r_n$ jumps when the maximum of $I(n,r)$ shifts from one edge of a convex section to the other. Fig. \ref{flatTops} shows an example for such a shift at $n=3485$. Here, $r_n$ drops down from $497$ to $436$. This means that the optimal partition for $n=3485$ contains $61$ fewer blocks than the optimal partition for $n=3484$. As can be seen in Fig.~\ref{flatTops}, the jump in $r_n$ is accompanied by a shift of the maximum from being on the right-hand side of a convex segment being on the left-hand side of the next convex segment. 

Table \ref{Tab1} describes  the values of $r$ at which downward jumps in the value of $r_n$ occur.  Before the jump, most of the subsets in an optimal partition $\pi$ are of size $\left \lfloor{\frac{n}{r}}\right \rfloor$, whereas after adding one additional leaf, the optimal partition contains mostly subsets of size $\left \lceil{\frac{n}{r}}\right \rceil$. As $r_n$ does not grow linearly, the block sizes $\left \lfloor{\frac{n}{r_n}}\right \rfloor$ and $\left \lceil{\frac{n}{r_n}}\right \rceil$ do not grow linearly either.  But contrary to $r_n$, the block sizes only alternate by $\pm 1$.\par \vspace{0.5cm}

\subsection{Approximating the rate of growth of $r_n$ with $n$}

In this section, recall the notation $\sim$ for asymptotic equivalence, in which $f(n)\sim g(n)$ is shorthand for
$\lim_{n\rightarrow \infty} f(n)/g(n) = 1$.
We want to investigate the growth of $r_n$ as  $n$ grows. 
Therefore, we need a differentiable approximation of $I_\pi$, as $I_\pi$ is not differentiable (its shape consists of piecewise-convex segments). 
From Theorem \ref{thm:equalblocksizes} we have: 
\begin{equation}
\label{gammago}
I(n,r) = - \ln \left(\frac{N(n,r)}{b(n)}\right) = -\ln \left(\frac{b(\lfloor \frac{n}{r}\rfloor+1)^{l} \cdot b(\lceil \frac{n}{r}\rceil+1)^{r-l}}{b(n-r+2)}\right). 
\end{equation}

Now $b(n+1) \sim \gamma(n)$ for the real-valued function $\gamma$ defined for $x>0$ by
 $\gamma(x)=\frac{1}{\sqrt{2}} \left(\frac{2}{e}\right)^x x^{x-1}$ ({\em cf.} \cite{mcdiarmid}). Let $I_\gamma(n,r)$ denote the approximation to
$I(n,r)$ obtained by first approximating $\lceil \frac{n}{r}\rceil$ and $\lfloor \frac{n}{r}\rfloor$
by $n/r$ (these approximations assume that $n/r\gg 1$), and then using $\gamma(x)$ in place of  $b(x+1)$ in the resulting expression for $I(n,r)$.
Making these substitutions, the expression on the far right of Eqn.~(\ref{gammago}) becomes independent of $l$ and we can write:
\begin{displaymath}
	I_\gamma(n,r)= -\ln \left(\frac{\gamma\left(\frac{n}{r}\right)^r}{\gamma(n-r+1)}\right) = -r \ln\left(\gamma\left(\frac{n}{r}\right)\right) + \ln(\gamma(n-r+1)).
\end{displaymath}

Let $\tilde{r}_n$ denote a  value of $r$ that  maximizes $I_\gamma(n,r)$. We want to use $\tilde{r}_n$ as an estimator for $r_n$. Fig. \ref{rApprox} shows the values of $\tilde{r}_n$ in comparison to $r_n$ as $n$ ranges from $1$ to $1000$ (over this range there is a unique value for $r$
that maximizes $I_\gamma(n,r)$).  Here, it can be seen that $\tilde{r}_n$ gives a reasonable approximation to $r_n$ over the range shown (note that $I_\gamma(n,r)$ deviates from $I(n, r)$ for values of $r$ close to $n$, however in this region $I(n,r)$ is far from its maximal value). 
\begin{figure}[htb]
	\center
	\includegraphics[scale=0.6]{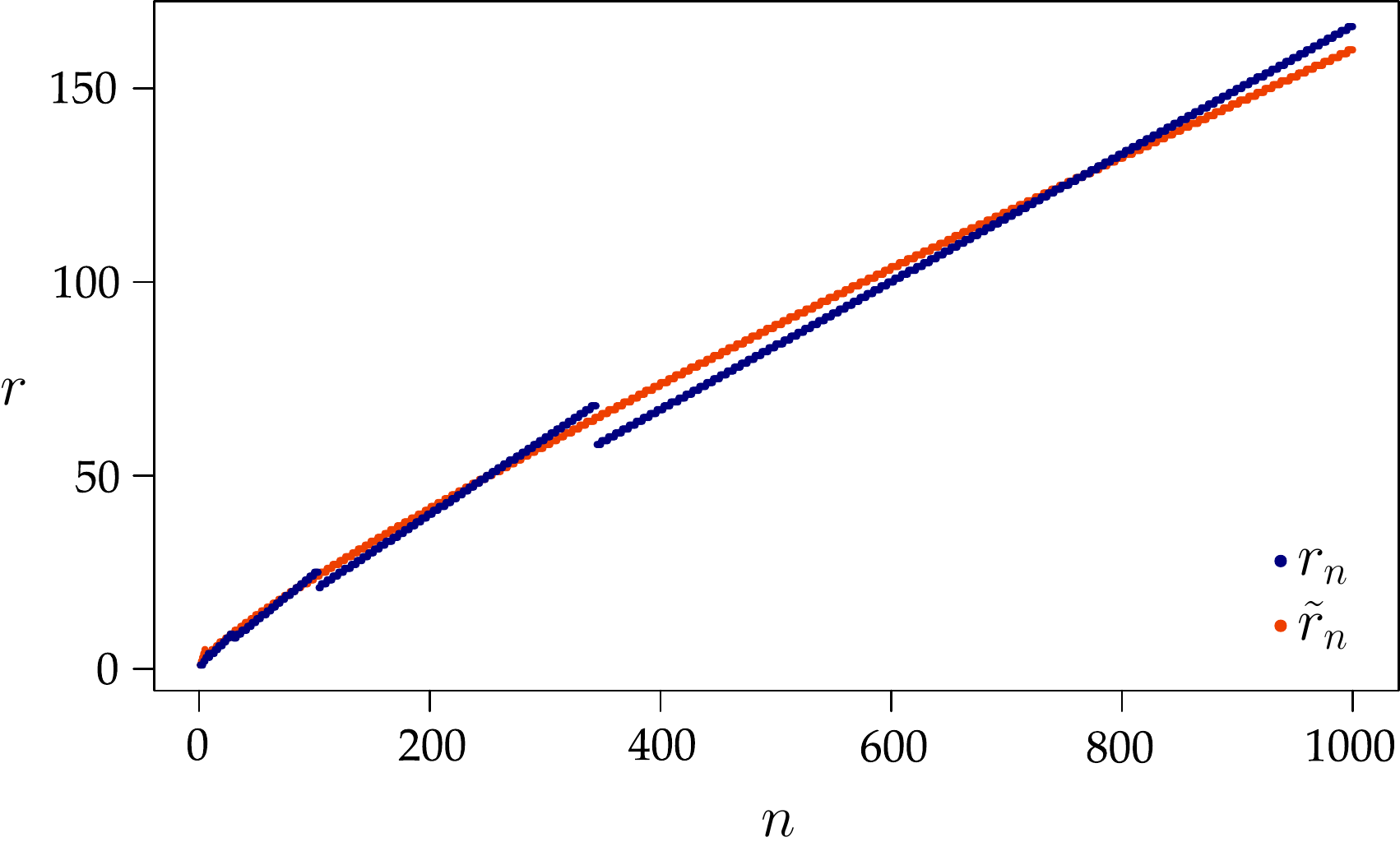}
	\caption{A comparison of $r_n$ (broken curve segements) and $\tilde{r}_n$ (continuous curve) for $n$ from $1$ to $1000$. It can be seen that as opposed to $r_n$, $\tilde{r}_n$ does not have any jumps back to a smaller value, but is instead increasing uniformly.}
	\label{rApprox}
\end{figure}

\begin{theorem}
\label{asythm}
The value(s) of $r=\tilde{r}_n$ at which $I_\gamma(n,r)$ achieves its maximum value satisfies the asymptotic equivalence $\tilde{r}_n \sim \frac{n}{\ln (n)}$ as $n\rightarrow \infty$.
\end{theorem}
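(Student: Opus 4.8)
The plan is to treat $r$ as a continuous variable on the interval $(1,n)$, locate the maximizer of $I_\gamma(n,r)$ as an interior stationary point, and then extract the leading-order asymptotics of that point. Abbreviating $g(x) := \ln\gamma(x) = -\tfrac12\ln 2 + (\ln 2 - 1)x + (x-1)\ln x$, the formula for $I_\gamma$ derived above reads $I_\gamma(n,r) = -r\,g(n/r) + g(n-r+1)$. A direct computation gives $g'(x) = \ln(2x) - \tfrac1x$, and differentiating $I_\gamma$ in $r$ yields
$$\frac{\partial}{\partial r}I_\gamma(n,r) = -g\!\left(\tfrac{n}{r}\right) + \frac{n}{r}\,g'\!\left(\tfrac{n}{r}\right) - g'(n-r+1).$$
Setting this to zero is the stationary condition I would work with.

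The first key step is an algebraic simplification. Writing $u = n/r$, the combination $-g(u) + u\,g'(u)$ collapses cleanly: the two $u\ln 2$ contributions cancel and the $\ln u$ terms combine, giving
$$-g(u) + u\,g'(u) = u + \ln u - 1 + \tfrac12\ln 2.$$
Substituting $v = n - r + 1$ together with $g'(v) = \ln(2v) - \tfrac1v$, the stationary equation reduces to the clean transcendental relation
$$u + \ln u = \ln v + 1 + \tfrac12\ln 2 - \frac{1}{v},$$
whose asymptotic solution determines $\tilde r_n$.

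The second step is the asymptotic inversion. Once it is known that the relevant stationary point satisfies $r = o(n)$ (so that $v = n-r+1 \sim n$ and hence $\ln v = \ln n + o(1)$), the relation becomes $u + \ln u = \ln n + O(1)$. Writing $u = \ln n - \ln\ln n + c + o(1)$ gives $\ln u = \ln\ln n + o(1)$, so that the ansatz is self-consistent with $c = 1 + \tfrac12\ln 2$; in particular $u \sim \ln n$. Since $r = n/u$ and $\ln\ln n/\ln n \to 0$, this yields $\tilde r_n = n/u \sim n/\ln n$, as claimed.

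The main obstacle I expect is not the inversion itself but the justification surrounding it. First, one must establish the a priori bound $r = o(n)$ (equivalently $u\to\infty$), so that replacing $\ln v$ by $\ln n$ is legitimate. Second, and more delicately, one must confirm that the global maximizer of $I_\gamma$ is this interior stationary point rather than a point near the boundary $r\approx n$, where the approximation $\gamma(x)\approx b(x+1)$ is poor and $I_\gamma$ behaves spuriously; here I would compare magnitudes, noting that the interior candidate gives $I_\gamma \sim n\ln n$, which dominates the $O(n)$ values attained as $r\to n$, so the global maximum is genuinely interior. Finally, one should address uniqueness of the stationary solution (via concavity of $I_\gamma$ in $r$, or a single sign change of $\partial_r I_\gamma$) and allow for the "value(s)" phrasing of the statement by checking that any maximizing $r$ obeys the same leading asymptotics. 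With these bounds and the maximality/uniqueness secured, the chain $u\sim\ln n \Rightarrow \tilde r_n\sim n/\ln n$ follows by the squeeze above.
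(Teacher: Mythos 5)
Your stationary-point route is genuinely different from the paper's proof, and its computational core is correct: your formula for $\partial I_\gamma/\partial r$ agrees exactly with the paper's decomposition $y(r)-z(r)$ (one can check the two expressions are identical after writing $(n-r)(n+1-2r) = (n-r)(n+1-r) - r(n-r)$), the collapse of $-g(u)+u\,g'(u)$ to $u+\ln u - 1 + \tfrac12\ln 2$ is right, and the inversion $u+\ln u = \ln n + O(1) \Rightarrow u \sim \ln n \Rightarrow \tilde r_n = n/u \sim n/\ln n$ is sound. The paper never solves the critical-point equation; instead it proves sign and size bounds on the gradient in four regions of $t=r/n$ (gradient $>1$ for $t\le\theta/\ln n$ with $\theta<1$; gradient $<-1$ for $\theta/\ln n\le t\le 1-\delta$ with $\theta>1$; gradient bounded by $C=\ln(e/\sqrt2)\approx 0.65$ on $[1-\delta,1]$) and squeezes every maximizer between $\theta/\ln n$ and $\theta'/\ln n$. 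Your approach, if completed, buys more than the paper's: it yields the second-order term $u=\ln n - \ln\ln n + 1 + \tfrac12\ln 2 + o(1)$, which the region-wise bounds do not give.

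However, the localization step you flag is a genuine gap, and your sketch for closing it fails as stated. The problem is not only the spurious behaviour near $r=n$: for $r=\epsilon n$ with $\epsilon\in(0,1)$ fixed, one has $I_\gamma(n,\epsilon n)=(1-\epsilon)n\ln n+O(n)$, which is \emph{not} $O(n)$. So comparing ``$\sim n\ln n$ versus $O(n)$'' rules out only points with $n-r=O(1)$ and the endpoints, and leaves open a maximizer at $r=\Theta(n)$ in the interior --- exactly the regime where your replacement of $\ln v$ by $\ln n$ is unjustified. Two repairs are available within your framework. Either sharpen the value comparison: uniformly for $r\ge\epsilon n$ one has $I_\gamma(n,r)\le(1-\epsilon)n\ln n+O(n)$, while $I_\gamma(n,n/\ln n)=n\ln n-n\ln\ln n+O(n)$, so for each fixed $\epsilon$ every maximizer eventually satisfies $r<\epsilon n$, giving $r=o(n)$ by a diagonal argument. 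Or classify critical points directly from your own equation: $u+\ln u=\ln v+1+\tfrac12\ln 2 - 1/v$ has no solutions with $u$ bounded and $v\to\infty$, so every critical point has either $u\to\infty$ (whence $u\sim\ln n$) or $v=O(1)$ (a near-boundary critical point whose value is $O(n)$). Finally, your fallback of ``concavity of $I_\gamma$ in $r$'' is false: the gradient is $\le -1$ in the middle range but equals $C>0$ at $r=n$, so $I_\gamma$ has a local minimum and then rises again near $r=n$; the derivative changes sign more than once. None of this is fatal --- uniqueness is not needed for the statement --- but as written the proof is incomplete precisely where the paper does its main work.
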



\begin{proof}

Consider the graph of $I_\gamma(n,r)$ against $r$.  The behaviour of $I_\gamma(n,r)$ is slightly involved, and so our proof uses the following strategy.
Let $t$ denote the ratio $r/n$, and so $0 \leq t \leq 1$, and let $\theta>0$ be a parameter that will take different values in the cases we consider (mostly we are concerned with
the cases where $0<\theta <1$ and $\theta >1$). 
For any $\delta \in (0,0.5)$ and any choice of $\theta$ we show that for $n$ sufficiently large, the graph of $I_\gamma$ has a gradient that is: 
 \begin{itemize}
 \item  greater than 1 for $t$ up to  $\frac{\theta}{\ln(n)}$, provided that  $\theta$ <1;
 \item  less than $-1$ for $t$ between $\frac{\theta}{\ln(n)}$  and $\delta$,  provided that $\theta >1$;
 \item less than $-1$ for $t$ between  $\delta$ and $1-\delta$;
 \item bounded by $C\sim 0.65$ for $t$  between $1-\delta$ and $1$.
 \end{itemize}
It follows that the (global) maximal value of $I_\gamma$ is given asymptotically (as $n$ grows)  by  $t \sim  \frac{1}{\ln (n)}$.  Note that the global maximal value cannot occur asymptotically (with $n$) at $t=1$ since the gradient of $I_\gamma$ is less or equal to $-1$ for $t$ over an interval of length (asymptotically with $n$)  at least $0.5$,  and the gradient is then bounded above by $C \sim 0.65$ for the remaining interval (i.e.  between $1-\delta$ and $1$) which has length less than $0.5$ (recall $\delta \in (0, 0.5)$).

Next we differentiate $I_{\gamma}(n, r)$ with respect to $r$. Writing
$$I_{\gamma}(n, r)=\ln\left(\frac{\gamma(n-r+1)}{\gamma\left(\frac{n}{r}\right)^r}\right),$$
and then replacing $\gamma(n-r+1)$ with $\frac{1}{\sqrt{2}}\left(\frac{2}{e}\right)^{n-r+1}(n-r+1)^{n-r}$ and $\gamma\left(\frac{n}{r}\right)$ with $\frac{1}{\sqrt{2}}\left(\frac{2}{e}\right)^{\frac{n}{r}}\left(\frac{n}{r}\right)^{\frac{n}{r}-1}$ and simplifying, we get
$$I_{\gamma}(n, r)=(1-r)\ln\left(\frac{\sqrt{2}}{e}\right)+(n-r)\ln\left(\frac{r(n-r+1)}{n}\right).$$
Differentiating $I_{\gamma}(n, r)$ with respect to $r$ gives:

\begin{equation}
\label{Igam}
\frac{d(I_\gamma(n,r))}{dr}= y(r)-z(r),
\end{equation}
where 
$$y(r)=\ln\left(\frac{e}{\sqrt{2}}\cdot\frac{n}{r(n+1-r)}\right) \mbox{ and } z(r) = \frac{(r-n)(n+1-2r)}{r(n+1-r)}.$$
Thus $ \frac{d(I_\gamma(n,r))}{dr}=0$ precisely  at values of $r$  for which   $y(r)-z(r)=0$. Note here that for $I_\gamma(n,r)$, the value $r$  can take any real value, not just integer values.
Let $t=t_n = r/n$. We may assume that $0 \leq t_n \leq 1$ for all $n$.  We will show that any value of $t_n$ that maximizes 
 $I_\gamma$  satisfies the asymptotic relationship $t_n \sim 1/\ln(n)$ (in other words, $\tilde{r}_n \sim n/\ln (n)$). 
Notice that if we let  $C=  \ln\left(\frac{e}{\sqrt{2}}\right)$ then we can write:
\begin{equation}
\label{ya}
y(r) = C -\ln( t) - \ln( n) - \ln\left(1+\frac{1}{n} -t\right).
\end{equation}
In addition, 
\begin{equation}
\label{za}
z(r) = \left(1-\frac{1}{t}\right) \cdot \frac{(1+ \frac{1}{n} -2t)}{(1+ \frac{1}{n} - t)}.
\end{equation}

We apply these equalities to firstly establish the following claims (which show that $\tilde{r}_n = o(n)$).  Suppose that  $\delta \in (0, 0.5)$.  We claim that:
\begin{itemize}
\item[(i)] If  $t \in [\delta, 1-\delta]$,  then
$
\frac{dI_\gamma(n,r)}{dr}  \leq h(n, \delta),
$
where $h(n, \delta)$ does not depend on $t$ and $h(n, \delta) <-1$ for all $n$ sufficiently large.
\item[(ii)] 
If $t \in [1-\delta, 1]$ and $n\geq 1$,  then 
$
\frac{dI_\gamma(n,r)}{dr}  \leq  K_\delta,
$
for a constant $K_\delta$ that converges to $C$ as $\delta \rightarrow 0$.
\end{itemize}

\bigskip

\noindent To establish Claim (i), Eqn~(\ref{ya}) implies that  $y(r) \leq C-\ln(\delta) - \ln(n)-\ln(\delta+ \frac{1}{n})$ and from  Eqn.~(\ref{za}) 
with $t \in [\delta, 1-\delta]$ we have $|z(r)| \leq \left(\frac{1}{\delta} -1 \right)\cdot \left| \frac{1+ \frac{1}{n} -2t}{1+ \frac{1}{n} - t}\right|$,
the second factor of which satisfies the inequality:
\begin{equation}
\label{esb}
 \left|\frac{1+ \frac{1}{n} -2t}{1+ \frac{1}{n} - t}\right| \leq \max\left\{1, \frac{|-1 + 2\delta +\frac{1}{n}|}{\delta + \frac{1}{n}}\right\}.
 \end{equation}
Thus,  $|z(r)| <  \left(\frac{1}{\delta} -1\right) a(n,\delta)$, where $a(n,\delta)$ is the bound on the right of Inequality~(\ref{esb}),
and so 
\begin{equation}
\label{esb2}
y(r)-z(r) \leq C-\ln(\delta) - \ln (n) - \ln\left(\delta+ \frac{1}{n}\right) + \left(\frac{1}{\delta} -1\right)a(n,\delta).
\end{equation}
If we now let $h(n, \delta)$ denote the term on the (entire) right-hand side of Inequality~(\ref{esb2})
then $h(n,\delta) \rightarrow -\infty$ as $n \rightarrow \infty$, which together with Eqn.~(\ref{Igam}) establishes
Claims (i).

To establish Claim (ii) note that  when $t \in [1-\delta, 1]$ we have $y(r) \leq C - \ln (1-\delta)$ and the right-hand-side converges to $C$ as $\delta \rightarrow 0$. Also,
 $-z(r) =\left (\frac{1}{t}-1\right) \cdot \frac{(1+ \frac{1}{n} -2t)}{(1+ \frac{1}{n} - t)}$ is 
 less or equal to zero for any value of $\delta < \frac{1}{2}$ once $n$ is sufficiently large.
This establishes Claim (ii). 

\bigskip

We next establish the following two claims:
\begin{itemize}
\item[(iii)] 
If $t \in [0, \frac{\theta}{\ln(n)}]$ and if  $\theta<1$, then
$
\frac{d I_\gamma(n,r)}{dr} \geq  h'(n, \theta),
$
where $h'(n,\theta)$ does not depend on $t$, and $h'(n, \theta) >1 $ for all $n$ sufficiently large.

\item[(iv)] If  $t \in [\frac{\theta}{\ln(n)}, \delta]$ and if  $\theta>1$ and $0<\delta< \frac{1}{\theta}$, then
$
\frac{dI_\gamma(n,r)}{dr} \leq  h''(n, \theta),
$
where $h''(n,\theta)$ does not depend on $t$, and $h''(n, \theta) <-1$  for all $n$ sufficiently large. 
\end{itemize}

\bigskip

To establish Claim (iii) observe that  $-\ln (t) >0$  (since $t<1$) and $-\ln\left(1+\frac{1}{n} -t\right)  \geq - \ln(2)$.  Thus,
 \begin{equation}
\label{yb}
y(r) \geq C' - \ln (n),
\end{equation}
where $C' = C-\ln (2)$. 
Moreover, since $0 \leq t \leq \frac{\theta}{\ln(n)}$  and since $\theta<1$ the second factor in the expression for $z(r)$ namely, $\frac{1+ \frac{1}{n} -2t}{1+ \frac{1}{n} - t}$ is bounded above by $1- \epsilon(n)$, where
$\epsilon(n)$ is a function only of $n$ that converges to zero as $n$ grows.  Thus we can write
\begin{equation}
\label{zb}
-z(r) \geq  \left(\frac{\ln(n)}{\theta}  - 1\right) (1- \epsilon(n)).
\end{equation}
 Combining Inequalities ~(\ref{yb}) and ~(\ref{zb}) gives
$\frac{dI_\gamma(n,r)}{dr}= y(r)-z(r)  \geq  h'(n, \theta)$,
where $$h'(n,\theta) = C' -\ln(n)\left(1- \frac{1}{\theta}(1-\epsilon(n))\right) - (1-\epsilon(n)).$$
Now  $h'(n,\theta) \rightarrow \infty$ as $n \rightarrow \infty$ (since $1- \frac{1}{\theta}(1-\epsilon(n))<0$ for all $n$ sufficiently large),
establishing Claim (iii).  
 
To establish Claim (iv), note that 
$y(r) \leq C-\ln\left(\frac{\theta}{\ln(n)} \right) - \ln(n)-\ln\left(1+ \frac{1}{n} - \frac{1}{\theta}\right).$
Moreover,  $-z(r) \leq \left(\frac{1}{t}-1\right) \leq \left(\frac{\ln(n)}{\theta} -1\right),$
and so $$y(r)-z(r) \leq - \left(1-\frac{1}{\theta}\right)\ln (n)  +\ln\left(\frac{\ln(n)}{\theta} \right) + C-1.$$
If we  take  $h''(n,\theta)$ to be the term on the right-hand side of this last inequality, we see that $h''(n,\theta)$ tends to $-\infty$ as $n \rightarrow \infty$, 
since $\left(1-\frac{1}{\theta}\right) >0$, thereby establishing Claim (iv).

It now follows from Claims (i) -- (iv)  that $I_\gamma(n,r)$ attains its maximal value at a value (or values) that can be written  
 $r= c_n \cdot \frac{n}{\ln (n)}$ where $c_n$ that converges to 1 as $n \rightarrow \infty$. 
 This completes the proof. 
 \hfill$\Box$
\end{proof}

\subsection{Remarks and questions} 
\label{remo}

For $n=120$, Theorem~\ref{asythm} gives the value $\tilde{r}_n \approx 25$, which is close to the exact value of $r_n=24$.  Fig.~\ref{rApprox} shows that $\tilde{r}_n$ provides a reasonable approximation to $r_n$ except for deviations near the `jumps'. Nevertheless it may well be that $\tilde{r}_n$ and $r_n$ are asymptotically equivalent (i.e. $\frac{\tilde{r}_n}{r_n}$ converges to 1 as $n \rightarrow \infty$) and the main step in a proof would be to first show that $n-r_n$ and $r_n$ both tend to infinity as $n\rightarrow \infty$.

Also, we have observed that `jumps' from $r_n$ to a smaller value $r_{n+1}$ tend to occur at values of $n$ for which $\frac{n}{r_n}$ is slightly greater than some integer (say $k$) while $\frac{n+1}{r_{n+1}}$ is slighly smaller than $k+1$ (for example, for the jump at $n=3484$,  $\frac{n}{r_n} = 7.01$, while $\frac{n+1}{r_{n+1}} = 7.99$). 
In that case:
$$\frac{n}{r_n} \approx \frac{n+1}{r_{n+1}}-1,$$
which rearranges to give the following estimate of the magnitude of a `jump' when $r_n> r_{n+1}$: 
$$ r_n - r_{n+1} \approx \frac{r_n(r_{n+1}-1)}{n}.$$
This is a partly heuristic (non-rigorous) argument, nevertheless the  approximation provides a reasonable estimate of the jump sizes  for the values reported in this paper. For example, for the jump that occurs at $n=3484$ where $r_n= 497$, while $r_{n+1} = 436$, we have
$$r_n - r_{n+1} = 61 \mbox{ while } \frac{r_n(r_{n+1}-1)}{n} \approx 62.05.$$


\section{Discussion} \label{sec:4}
In this manuscript, we analysed which characters have the highest information content. One of our main results is that in an optimal character with $r_n$ character states, all these states have to appear roughly equally often, as such a character can only induce at most two block sizes (which can differ by 1 at most). If $r$ divides the number $n$ of taxa, every block has the same size, $\frac{n}{r}$.

Concerning the behavior of $r_n$, the optimal number of states in order to maximize $I_\pi$, we found that although it has a generally increasing, partially linear trend, jumps occur (i.e. there are values of $n$ for which $r_{n+1}<r_n$). We analysed the reasons for these jumps, namely the shape of $I_\pi$, which is a concatenation of convex segments. Moreover, we presented an approximation for $I_\pi$, for which $n/\tilde{r}_n \sim \ln(n)$. Note that this does not directly imply that $n/r_n$ also tends to infinity, and formally establishing such a result could be an interesting exercise for future work.  All our theoretical statements were underlined by explicit calculations for up to $n=10000$. In order to be able to perform exhaustive searches for such large values of $n$, we had to find a region on which we can restrict the search. This, too, was done with the help of our approximation.
Some questions for future research have been raised (see Section~(\ref{remo})). More generally, determining the location of jumps as well as the location of block size changes (in terms of $n$) should lead to a deeper understanding of the most informative characters. 

Finally, as noted earlier, given any binary tree $T$ (involving {\em any} number of leaves) just  four characters (on a large enough number of states) suffice to ensure that $T$ is the only tree on which those four characters are convex (\cite{foursuffice}, \cite{bor06}).   A natural question is whether these four characters are of the `maximally informative' form as described in this paper. It turns out that for certain trees they divide up the leaf set $[n]$ quite differently. In particular, for a caterpillar tree, two of the characters described in \cite{foursuffice} 
partition the leaf set into (roughly) $n/2$ blocks of size $2$ while the other two characters partition the leaf set into one block of size (roughly) $n/2$ while the remaining leaf blocks are of size 1.

\begin{acknowledgements}
We thank the two anonymous reviewers for several helpful comments on an earlier version of this paper. 
I.D. and E.K. thank the International office at the University of Greifswald and the German Academic Exchange Service (DAAD) for the support through the mobility program PROMOS (travel scholarship). We also thank the (former) Allan Wilson Centre for supporting this research.
\end{acknowledgements}

\bibliography{Refs}
\bibliographystyle{spbasic}

     \newpage

\section{Appendix}

Proof of Lemma~\ref{lem:equalblocksizes}

We first consider the case $m=2$. In this case, we have $b(m+s)=b(2+s)$ and $b(m)=b(2)=1$ as well as $b(m+s-1)=b(s+1)$ and $b(m+1)=b(3)=1$.  In total, we have $b(m+s) \cdot b(m) = b(s+2) > b(s+1) = b(m+1)\cdot b(m+s-1)$, which is true for all $s\geq 2$.

We now consider the case $m\geq 3$. 
As $s\geq 2$, we have: \begin{eqnarray*} 2m + 2s &>& 2m +2 \\ \Rightarrow 2(m+s)-5&> &2m-3 \\ \Rightarrow (2(m+s)-5) \cdot (2(m+s)-7)!! \cdot (2m-5)!! &>& (2m-3) \cdot(2(m+s)-7)!! \cdot (2m-5)!! \\ \Rightarrow (2(m+s)-5)!! \cdot (2m-5)!! &>& (2m-3)!! \cdot (2(m+s)-7)!! \\ \Rightarrow (2(m+s)-5)!! \cdot (2m-5)!! &>& (2(m+1)-5)!! \cdot (2(m+s-1)-5)!! \\ \Rightarrow b(m+s) \cdot b(m) &>& b(m+1)\cdot b(m+s-1).\end{eqnarray*}
The last line uses the fact that $b(m)=(2m-5)!!$ for all $m\geq 3$. This completes the proof. \qed

Note that Lemma \ref{lem:equalblocksizes} is only stated for $m \geq 2$. If $m=1$, the lemma only holds for $s\geq 3$. To see this, consider the case $m=1$ and $s=2$. Then,  $b(m+s)\cdot b(m)= b(1+2)\cdot b(1) = b(1+1)\cdot b(1+2-1)= b(m+1)\cdot b(m+s-1)$, as $b(1)=b(2)=b(3)=1$. Therefore the strict inequality stated in the lemma no longer holds.

\end{document}